\newtheorem{theorem}{Theorem}
\newtheorem{lemma}[theorem]{Lemma}
\newcommand{\dcjd}[2]{\mathrm{d}_{dcj}(#1, #2)}
\newcommand{\ms}{\mathrm{ms}}
\newcommand{\ts}{\mathrm{ts}}
\newcommand{\MBG}{\mathrm{BG}}
\newenvironment{bmcformat}{\baselineskip20pt\sloppy\setboolean{publ}{false}}{\baselineskip20pt\sloppy}
\begin{document}
\begin{bmcformat}

\title{On pairwise distances and median score of three genomes under DCJ}

\author{Sergey Aganezov, Jr.$^1$%
       \quad and \quad
         Max A. Alekseyev\correspondingauthor$^{1,2}$%
         \email{Max A. Alekseyev\correspondingauthor - \href{mailto:maxal@cse.sc.edu}{maxal@cse.sc.edu}}%
      }

\address{%
    \iid(1)Algorithmic Biology Laboratory, St. Petersburg Academic University, St. Petersburg, Russia\\
    \iid(2)Department of Computer Science and Engineering, University of South Carolina, Columbia, SC, USA
}%

\maketitle

\vskip-5pt

\begin{abstract}

In comparative genomics, the rearrangement
distance between two genomes (equal the minimal number of genome rearrangements required to transform them into a single genome)
is often used for measuring their evolutionary remoteness.
Generalization of this measure to three genomes is known as the \emph{median score} (while a resulting genome is called \emph{median genome}).
In contrast to the rearrangement distance between two genomes which can be computed in linear time, 
computing the median score for three genomes is NP-hard. 
This inspires a quest for simpler and faster approximations for the median score, the most natural of which appears to be the halved sum 
of pairwise distances which in fact represents a lower bound for the median score.

In this work, we study relationship and interplay of pairwise distances between three genomes and their median score under 
the model of Double-Cut-and-Join (DCJ) rearrangements.
Most remarkably we show that while a rearrangement may change the sum of pairwise distances by at most 2 
(and thus change the lower bound by at most 1), 
even the most ``powerful'' rearrangements in this respect that increase the lower bound by 1 
(by moving one genome farther away from each of the other two genomes), 
which we call \emph{strong}, do not necessarily affect the median score. 
This observation implies that the two measures are not as well-correlated as one's intuition may suggest.

We further prove that the median score attains the lower bound exactly on the triples of genomes 
that can be obtained from a single genome with strong rearrangements.
While the sum of pairwise distances with the factor $\nicefrac{2}{3}$ represents an upper bound for the median score, its tightness remains unclear.
Nonetheless, we show that 
the difference of the median score and its lower bound is not bounded by a constant.
\end{abstract}

\ifthenelse{\boolean{publ}}{\begin{multicols}{2}}{}

\section*{Background}

The number of large-scale rearrangements (such as reversals, translocations, fissions, and fusions) between two genomes 
is often used as a measure of their evolutionary remoteness.
The minimal number of such rearrangements required to transform one genome into the other is called \emph{rearrangement distance}. 
Computing rearrangement distances between the genomes of interest is often a pre-requisite for their comparative analysis (e.g., phylogeny reconstruction).

\emph{Double-Cut-and-Join (DCJ) rearrangements}~\cite{10,11} (also known as \emph{2-breaks}~\cite{2}) represent a convenient 
model of reversals, translocations, fissions, and fusions, which allows one to compute the corresponding \emph{DCJ distance} between two genomes in linear time.

Phylogeny reconstruction for three given genomes involves reconstruction of their \emph{median genome} that minimizes the total distance from the given genomes. This minimal
total distance, called the \emph{median score}~\cite{6}, represents a natural generalization of the DCJ distance to the case of three genomes.
In contrast to DCJ distance between two genomes, computing the median score of three genomes is NP-hard~\cite{5, 6}. 
While there exist exact~\cite{8,9} and heuristic~\cite{7, 12, 13} algorithms for this problem, they can hardly be used for large genomes.
This inspires a quest for simpler and faster approximations for the median score.

The simplest and easily computable approximation for the median score of three genomes is given by the sum of their pairwise DCJ distances, 
which we call the \emph{triangle score}.
In this work, we study the tightness of this approximation.
In particular, we show that with the factor $\nicefrac{1}{2}$ it represents a lower bound and with the factor $\nicefrac{2}{3}$ it represents an upper bound for the median score.
We further prove that the lower bound is attained exactly for the triples that can be obtained from a single genome by \emph{strong} rearrangements that increase 
the triangle score by 2 (by moving one genome farther away from each of the other two genomes). In other words, strong rearrangements are those that increase the lower bound by 1. 
From this perspective, it natural to expect that strong rearrangements always increase the median score. However, we disprove this expectation with a counterexample.

While tightness of the upper bound remains unclear, we remark that a better upper bound for the median score may improve performance of algorithms for computing median score 
based on the adequate subgraph decomposition~\cite{8,9}.
Still, we make an initial step in this direction by proving that there is no upper bound equal the lower bound plus a constant.

\section*{Methods}
\subsection*{Breakpoint graphs and genome rearrangements}

In this work, we focus on circular genomes consisting of one or more circular chromosomes. 
A circular genome on a set of $n$ genes (say, $\{1,2,\dots, n\}$) can be represented as a perfect matching on $2n$ vertices~\cite{1, 4} where
each gene is represented with a pair of vertices, corresponding to the gene's extremities: ``head'' and ``tail''; while
each adjacency between two genes in the genome is represented with an edge between respective extremities. 
\emph{Breakpoint graph} of genomes $A_1, A_2, \dots, A_k$, denoted $\MBG(A_1,\dots,A_k)$ 
is defined as the superposition of $k$ perfect matchings
representing given genomes, each of its own color~\cite{4}. We refer to edges representing adjacencies in the genome $A_i$ as \emph{$A_i$-edges} ($i=1,2,\dots,k$).
When all genomes $A_1, A_2, \dots, A_k$ are identical, their breakpoint graph is called an \emph{identity breakpoint graph}. 
Every identity breakpoint graph consists of \emph{trivial multicycles} formed by $k$ parallel edges of all colors.

A \emph{DCJ} rearrangement in genome $A$ replaces a pair of $A$-edges with another pair of $A$-edges forming matching on the same four vertices.
The \emph{DCJ distance} between genomes $A$ and $B$ on the same set of $n$ genes, denoted with $\dcjd{A}{B}$, is defined as 
the minimal number of DCJs required to transform one genome into the other.
The DCJ distance $\dcjd{A}{B}$ is closely connected with the number $c(A,B)$ of alternating cycles (i.e., cycles with edges of alternating colors) 
in the breakpoint graph $\MBG(A,B)$ by the formula: $\dcjd{A}{B} = n - c(A,B)$. We remark that $c(A,B)$ may range from $1$ to $n$, implying that $0\leq \dcjd{A}{B}\leq n-1$.
A single DCJ in $A$ or $B$ can change $\dcjd{A}{B}$ by at most $1$~\cite{2, 3, 10}.

\begin{figure*}[!t]
\centering 
\includegraphics{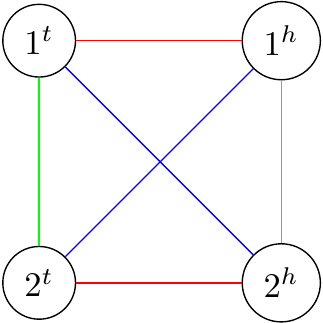}
\caption{Breakpoint graph $\MBG(A,B,C)$ of genomes $A = (1) (2)$ (red edges), $B = (1, 2)$ (blue edges), and $C = (1, -2)$ (green edges),
where $(1^t - 1^h - 2^t - 2^h - 1^t)$ is an $AB$-cycle, $(1^t - 1^h - 2^h - 2^t - 1^t)$ is an $AC$-cycle, and $(1^t - 2^h - 2^t - 1^h - 1^t)$ is a $BC$-cycle.}
\label{fig:MBG_example}
\end{figure*}

Let $A,B,C$ be genomes on a set of $n$ genes.
Their breakpoint graph $\MBG(A,B,C)$ is formed by $A$-edges, $B$-edges, and $C$-edges   
so that each pair of genomes define alternating cycles, 
called respectively $AB$-cycles, $AC$-cycles, and $BC$-cycles (Fig.~\ref{fig:MBG_example}). 
We further define the \emph{triangle score} $\ts(A,B,C)$ as the sum of pairwise DCJ distances:
$$\ts(A,B,C) = \dcjd{A}{B} + \dcjd{A}{C} + \dcjd{B}{C}.$$
Since a DCJ in one of the genomes can change each of the two corresponding distances by at most $1$, 
it can change $\ts(A,B,C)$ by at most 2.

\begin{lemma}\label{about_two_a_edges}
Let $A,B,C$ be genomes on the same set of $n$ genes such that 
$\dcjd{A}{B}$ and $\dcjd{A}{C}$ are less than $n-1$. 
Then in $\MBG(A,B,C)$ there exists a pair of $A$-edges that belong to two distinct $AB$-cycles and two distinct $AC$-cycles.
\end{lemma}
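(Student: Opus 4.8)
The plan is to reduce the statement to a short combinatorial argument about how the $A$-edges are distributed among the cycles of $\MBG(A,B,C)$. Recall from the distance formula $\dcjd{A}{B}=n-c(A,B)$ that the hypothesis $\dcjd{A}{B}<n-1$ is equivalent to $c(A,B)\geq 2$, and likewise $\dcjd{A}{C}<n-1$ is equivalent to $c(A,C)\geq 2$. Thus $\MBG(A,B,C)$ contains at least two $AB$-cycles and at least two $AC$-cycles. The key structural observation I would use is that every $A$-edge lies in exactly one $AB$-cycle (in the two-colored subgraph on $A$- and $B$-edges) and in exactly one $AC$-cycle, and that every such alternating cycle contains at least one $A$-edge. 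Hence each $A$-edge $e$ carries a well-defined pair of labels, namely the index of its $AB$-cycle and the index of its $AC$-cycle, and the $AB$-cycles and the $AC$-cycles each partition the set of $A$-edges.

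First I would produce a candidate pair from the $AB$-side: since there are at least two $AB$-cycles, choose $A$-edges $e$ and $e'$ lying in two distinct $AB$-cycles. If $e$ and $e'$ happen to lie in distinct $AC$-cycles as well, then they already form the desired pair and we are finished.

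The only remaining case is that $e$ and $e'$ lie in one and the same $AC$-cycle, say $Q$. Here I would invoke the second hypothesis: since there are at least two $AC$-cycles, there is an $A$-edge $f$ lying in some $AC$-cycle $Q'\neq Q$. Now $f$ belongs to a single $AB$-cycle, whereas $e$ and $e'$ belong to two \emph{different} $AB$-cycles, so $f$ must lie in a different $AB$-cycle from at least one of $e,e'$; say it differs from $e$. Then $e$ and $f$ lie in distinct $AB$-cycles and, because $e\in Q$ while $f\in Q'\neq Q$, they also lie in distinct $AC$-cycles. This exhibits the required pair of $A$-edges belonging to two distinct $AB$-cycles and two distinct $AC$-cycles.

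I expect no serious obstacle: the heart of the argument is a one-line pigeonhole on the $AB$-cycle label of the auxiliary edge $f$. The only points demanding care are the bookkeeping that translates the distance inequalities into the cycle-count inequalities $c(A,B),c(A,C)\geq 2$, and the verification that every alternating $AB$- or $AC$-cycle contains an $A$-edge, so that the two families of cycles genuinely induce partitions of the $A$-edges and the labels are defined for every edge used above.
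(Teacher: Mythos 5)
Your proposal is correct. The reduction is the same as the paper's: both arguments translate the distance hypotheses into the statement that the $AB$-cycles and the $AC$-cycles each partition the set of $A$-edges into at least two nonempty blocks, and then prove the purely combinatorial fact that two such partitions must separate some pair of elements simultaneously. Where you differ is in how that combinatorial core is established. The paper argues by contradiction: assuming no such pair exists, it shows that every block $P_i$ of the first partition is contained in a single block $Q_1$ of the second, forcing $Q_1$ to be the whole edge set and contradicting the existence of a second $AC$-cycle. You instead give a direct pigeonhole argument: take $e,e'$ in distinct $AB$-cycles; if they already lie in distinct $AC$-cycles you are done, and otherwise an $A$-edge $f$ from a different $AC$-cycle must disagree in $AB$-cycle with at least one of $e,e'$, yielding the pair. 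Your version is slightly shorter and has the advantage of being constructive --- it describes how to locate the pair of edges, which is the pair one would actually feed to the strong DCJ in Theorem~\thref{strong_DCJ} --- while the paper's contradiction argument makes the underlying structure of the two partitions (the lattice of intersections $P_i\cap Q_j$) more explicit. Both are complete; your only obligations, which you correctly flag, are the standard facts that every $AB$- and $AC$-cycle contains at least one $A$-edge and that each $A$-edge lies in exactly one cycle of each family.
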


\begin{proof}
Since $\dcjd{A}{B}<n-1$, there exist at least two distinct $AB$-cycles in the breakpoint graph $\MBG(A,B,C)$.
Therefore, the $AB$-cycles define a partition of the set of $A$-edges $S_A$ into two or more nonempty subsets: $S_A = P_1 \cup P_2 \cup \dots$

Similarly, since $\dcjd{A}{C}<n-1$, there exist at least two distinct $AC$-cycles in $\MBG(A,B,C)$ so that the $AC$-cycles
define a partition of $S_A$ into two or more nonempty subsets: $S_A = Q_1 \cup Q_2 \cup \dots$.
Intersecting the subsets in the two partitions, we get a partition of $S_A$ into subsets, each consisting of $A$-edges 
that belong to the same $AB$-cycle and the same $AC$-cycle: $S_A = \bigcup_{i,j} (P_i \cap Q_j)$.

Suppose that there is no required pair of $A$-edges, implying that for any two non-empty intersections $P_i \cap Q_j$ and $P_{i'}\cap Q_{j'}$, 
we have either $i=i'$ or $j=j'$.
Without loss of generality, assume that $P_1 \cap Q_1$ is non-empty. Then for every $i>1$ and $j>1$, $P_i \cap Q_j$ must be empty, implying that $P_i\subset Q_1$.
In particular, $P_2\cap Q_1=P_2$ is non-empty and by the same reasoning, we have $P_1 \subset Q_1$. Therefore, $P_i\subset Q_1$ for all $i$, implying that $Q_1 = S_A$, 
a contradiction to non-emptiness of $Q_2$. This contradiction proves that a required pair of $A$-edges exists.
\end{proof}

\begin{figure*}[!t]
\centering 
\includegraphics{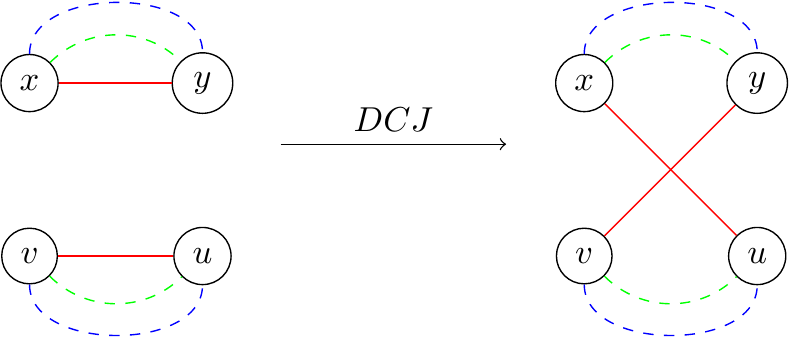}
\caption{$A$-edges (red) that belong to distinct $AB$-cycles and distinct $AC$-cycles, denoted by dashed blue lines and dashed green lines, respectively (left panel).
A DCJ on these $A$-edges merges these $AB$-cycles and $AC$-cycles into a single $AB$-cycle and a single $AC$-cycle, and thus increases $\ts(A,B,C)$ by 2 (right panel).}
\label{fig:dcj_d_ABC_2}
\end{figure*}

\begin{theorem}\label{strong_DCJ}
If between three genomes $A,B,C$ on the same set of $n$ genes at least two pairwise DCJ distances are less than $n-1$, 
then there exists a DCJ (called \emph{strong}) that increases $\ts(A,B,C)$ by 2. 
\end{theorem}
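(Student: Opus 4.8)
The plan is to reduce the statement to Lemma~\ref{about_two_a_edges} and then exhibit an explicit DCJ. First I would observe that any two of the three pairwise distances share a common genome, since the three pairs $\{A,B\}$, $\{A,C\}$, $\{B,C\}$ intersect pairwise in exactly one genome. Hence, after relabeling, I may assume that the two distances below $n-1$ are $\dcjd{A}{B}$ and $\dcjd{A}{C}$, so that the hypothesis of Lemma~\ref{about_two_a_edges} is met. The lemma then supplies a pair of $A$-edges $e_1,e_2$ that lie in two distinct $AB$-cycles and in two distinct $AC$-cycles of $\MBG(A,B,C)$.

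Next I would perform a DCJ on $e_1$ and $e_2$, i.e.\ replace them by one of the two nontrivial matchings on their four (necessarily distinct) endpoints. Since this rearrangement touches only $A$-edges, it leaves the $B$- and $C$-matchings, and therefore $\dcjd{B}{C}$, unchanged. To control $\dcjd{A}{B}$, I would restrict attention to the $AB$-breakpoint graph: deleting $e_1$ and $e_2$ turns their two distinct $AB$-cycles into two vertex-disjoint alternating paths, and reconnecting the four endpoints by either nontrivial matching splices these two paths into a single $AB$-cycle. Thus the DCJ merges the two $AB$-cycles into one, decreasing $c(A,B)$ by exactly $1$ and hence increasing $\dcjd{A}{B} = n - c(A,B)$ by exactly $1$. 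The identical argument in the $AC$-breakpoint graph shows the same DCJ increases $\dcjd{A}{C}$ by exactly $1$.

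Combining the three contributions, the DCJ raises $\ts(A,B,C)$ by $1+1+0 = 2$, as required, and we may declare it strong. I expect the main (and essentially the only) subtlety to be the \emph{simultaneous} merging: a priori one might fear that the reconnection needed to merge the two $AB$-cycles differs from the one needed to merge the two $AC$-cycles. The resolution is the observation that when two edges lie in distinct cycles, merging is forced --- both nontrivial reconnections join the two cycles into one, so no reconnection can split either pair. Consequently either choice of DCJ works for both colors at once, and I would make this cycle-surgery claim precise (via the path-splicing picture above, cf.\ Fig.~\ref{fig:dcj_d_ABC_2}) as the one step that genuinely needs care.
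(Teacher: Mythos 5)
Your proposal is correct and takes essentially the same route as the paper's proof: relabel so that the two distances below $n-1$ share the genome $A$, invoke Lemma~\ref{about_two_a_edges} to obtain the pair of $A$-edges, and observe that a DCJ on them merges the two $AB$-cycles and the two $AC$-cycles while leaving the $BC$-cycles intact. The only difference is that you make explicit the path-splicing argument showing that either nontrivial reconnection necessarily merges (rather than splits) cycles lying in distinct components --- a detail the paper delegates to Fig.~\ref{fig:dcj_d_ABC_2}.
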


\begin{proof}
Without loss of generality, we assume that $\dcjd{A}{B}<n-1$ and $\dcjd{A}{C}<n-1$. 
By Lemma~\ref{about_two_a_edges}, there are $A$-edges $(x,y)$ and $(u,v)$ that belong to distinct $AB$-cycles and distinct $AC$-cycles in $\MBG(A,B,C)$.
Using any DCJ on these edges (Fig.~\ref{fig:dcj_d_ABC_2}), we decrease the number of $AB$-cycles as well as the number of $AC$-cycles by $1$. 
Since $BC$-cycles remain intact, this DCJ increases $\ts(A,B,C)$ by 2. 
\end{proof}

\begin{theorem}\label{ThPDachieve}
Let $(p,q,r)$ be a triple of integers from the interval $[0,n-1]$, satisfying the triangle inequality.
There exist three genomes on a set of $n$ genes whose pairwise DCJ distances are $(p,q,r)$.
Moreover, these genomes can be obtained with $\left\lfloor \tfrac{p+q+r}{2} \right\rfloor$ strong DCJs and possibly one other DCJ
(when $p+q+r$ is odd) from a single genome.
\end{theorem}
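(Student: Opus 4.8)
The plan is to track the triple of pairwise distances $(\dcjd{A}{B},\dcjd{A}{C},\dcjd{B}{C})$ as a vector and to build the desired genomes starting from the identity triple $A=B=C$, where this vector is $(0,0,0)$. By Lemma~\ref{about_two_a_edges} (applied to whichever genome plays the role of $A$ there), whenever the two distances sharing a common genome $X$ are both less than $n-1$, there is a strong DCJ on $X$ that raises each of those two distances by exactly $1$ and leaves the third unchanged (the latter because a DCJ on $X$-edges leaves the other two matchings untouched). Thus a strong DCJ on $A$, $B$, or $C$ adds to the distance vector one of $(1,1,0)$, $(1,0,1)$, $(0,1,1)$. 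Writing $(p,q,r)=(\dcjd{A}{B},\dcjd{A}{C},\dcjd{B}{C})$ without loss of generality, I would first treat the case $p+q+r$ even: solving $a+b=p$, $a+c=q$, $b+c=r$ for the numbers $a,b,c$ of the three types of strong DCJ gives $a=\tfrac{p+q-r}{2}$, $b=\tfrac{p+r-q}{2}$, $c=\tfrac{q+r-p}{2}$, which are integers since $p+q+r$ is even, nonnegative by the triangle inequality, and sum to $\tfrac{p+q+r}{2}$.

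It remains to check these $a+b+c$ strong DCJs can be carried out, i.e.\ that the precondition of Lemma~\ref{about_two_a_edges} holds at each step in any order of application. This is automatic: each distance is incremented exactly as many times as its target value, so immediately before any increment of $\dcjd{A}{B}$ its current value is at most $p-1\le n-2<n-1$, and likewise for the other two distances. Hence every needed strong DCJ is available, and after all of them the distance vector is exactly $(p,q,r)$.

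For the odd case I would decrement one coordinate, realize the resulting even triple by strong DCJs as above (using $\lfloor\tfrac{p+q+r}{2}\rfloor$ of them), and then restore the coordinate with a single non-strong DCJ. When $p+q+r$ is odd the triangle inequalities are strict and every coordinate is at least $1$ (a zero coordinate would force the other two equal and the sum even), so each of $(p-1,q,r)$, $(p,q-1,r)$, $(p,q,r-1)$ is a valid even triple. The extra DCJ must raise one distance by $1$ while keeping a second one fixed (a single-genome DCJ keeps the third fixed automatically). For instance, to raise $\dcjd{A}{C}$ while preserving $\dcjd{A}{B}$, I would find two $A$-edges lying in distinct $AC$-cycles but in one common $AB$-cycle: being in distinct $AC$-cycles, both DCJ resolutions merge those cycles and increase $\dcjd{A}{C}$ by $1$; being in one $AB$-cycle, exactly one resolution leaves that cycle, and hence $\dcjd{A}{B}$, intact, and I pick that one.

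The main obstacle is to guarantee such a pair of edges exists for at least one admissible choice of coordinate to decrement. The pair ``distinct $AC$-cycles, same $AB$-cycle'' fails to exist exactly when the $AB$-cycle partition of the $A$-edges refines the $AC$-cycle partition, which forces $c(A,B)\ge c(A,C)$, i.e.\ $\dcjd{A}{B}\le\dcjd{A}{C}$; so the bump is available whenever the coordinate to be raised is at most one of the other two, routing the DCJ through the genome shared with that larger coordinate. This yields sufficient conditions $\max(q,r)\ge p$, $\max(p,r)\ge q$, $\max(p,q)\ge r$ for raising $\dcjd{A}{B}$, $\dcjd{A}{C}$, $\dcjd{B}{C}$ respectively, and a short argument shows they cannot all fail (simultaneous failure would give both $p>r$ and $r>p$). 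Hence some coordinate can always be bumped by one additional DCJ, completing the odd case with $\lfloor\tfrac{p+q+r}{2}\rfloor$ strong DCJs and one other DCJ.
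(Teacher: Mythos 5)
Your proof is correct, and while your even case is exactly the paper's argument (same decomposition of $(p,q,r)$ into $\tfrac{p+q-r}{2}(1,1,0)+\tfrac{p+r-q}{2}(1,0,1)+\tfrac{q+r-p}{2}(0,1,1)$, realized by strong DCJs from the identity), your odd case takes a genuinely different route. The paper handles odd $p+q+r$ by \emph{prepending} the extra DCJ: it first checks $p>0$, exhibits an explicit two-gene gadget with pairwise distances $(1,1,1)$ reachable from a single genome by one strong and one non-strong DCJ (the configuration of Fig.~\ref{fig:MBG_example}), and then reduces to the even case on $(p-1,q-1,r-1)$. You instead \emph{append} the extra DCJ: you realize a decremented even triple entirely with strong DCJs and finish with one non-strong DCJ that raises a single distance while fixing the other two, whose existence you establish by a partition-refinement argument dual to Lemma~\ref{about_two_a_edges} (if no two $A$-edges lie in one $AB$-cycle but distinct $AC$-cycles, the $AB$-partition refines the $AC$-partition, forcing $\dcjd{A}{B}\leq\dcjd{A}{C}$), together with the observation that of the two resolutions of a DCJ on two $A$-edges in a common $AB$-cycle, one preserves that cycle. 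The paper's version is shorter and fully explicit but leans on a hand-verified gadget; yours costs a longer case analysis (which admissible coordinate to decrement, and the check that not all three bump conditions can fail) but needs no ad hoc construction and yields a reusable structural fact about when a DCJ can increase exactly one pairwise distance. Both arguments are also slightly complementary in rigor: you verify the precondition of Theorem~\ref{strong_DCJ} at every intermediate step, which the paper leaves implicit, and both achieve the stated count of $\left\lfloor\tfrac{p+q+r}{2}\right\rfloor$ strong DCJs plus one other.
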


\begin{proof}
Without loss of generality, we assume that $p\leq q\leq r$ and that $p=\dcjd{A}{B}$, $q=\dcjd{A}{C}$, and $r=\dcjd{B}{C}$ where $A,B,C$ are the genomes being constructed.

If $p+q+r$ is even, we start with $A$, $B$, $C$ being the same genome and notice that
$$(p,q,r) = \frac{p+q-r}{2}\cdot (1,1,0) + \frac{p+r-q}{2}\cdot (1,0,1) + \frac{q+r-p}{2}\cdot (0,1,1)$$
where by the triangle inequality all coefficients are nonnegative.
This identity instructs us to apply $\tfrac{p+q-r}{2}$ strong DCJs to the genome $A$ (increasing both $\dcjd{A}{B}$ and $\dcjd{A}{C}$), 
$\tfrac{p+r-q}{2}$ strong DCJs to genome $B$, and $\tfrac{q+r-p}{2}$ strong DCJs to genome $C$. 
Existence of such strong DCJs is guaranteed by Theorem~\ref{strong_DCJ}.

If $p+q+r$ is odd, then we have $p>0$ as otherwise the triangle inequality would imply $q=r$ and thus even $p+q+r$.
In this case we start with $A$, $B$, $C$ being three genomes with pairwise DCJ distances $(1,1,1)$ such that $\MBG(A,B,C)$
consists of trivial multicycles, except for the vertices $1^t$, $1^h$, $2^t$, and $2^h$ connected as in Fig.~\ref{fig:MBG_example}.
It is easy to see that these genomes can be obtained from the same genome by one strong DCJ and one non-strong DCJ.
We further increase the pairwise DCJ distances between genomes $A$, $B$, $C$ by $(p',q',r')=(p-1,q-1,r-1)$ 
with $\tfrac{p'+q'+r'}{2}$ strong DCJs as above (notice that $p'+q'+r'$ is even and $(p',q',r')$ satisfies the triangle inequality).
Therefore, the total number of strong DCJs in this case is 
$$1 + \frac{p'+q'+r'}{2} = \frac{p+q+r-1}{2} = \left\lfloor \frac{p+q+r}{2} \right\rfloor.$$
\end{proof}

While all triples of pairwise DCJ distances are achievable with strong DCJs, not 
all breakpoint graphs of three genomes can be constructed from an identity breakpoint graph this way. 
In particular, Figure~\ref{fig:non_coll_MGB} gives an example of breakpoint graph $\MBG(A,B,C)$ such that $\ts(A,B,C)$ cannot be decreased by 2 with a DCJ.
In this example, we have $\ts(A,B,C) = 6$ but there is no sequence of three DCJs that would produce $\MBG(A,B,C)$ from an identity breakpoint graph.

\begin{figure*}[!t]
\centering 
\includegraphics{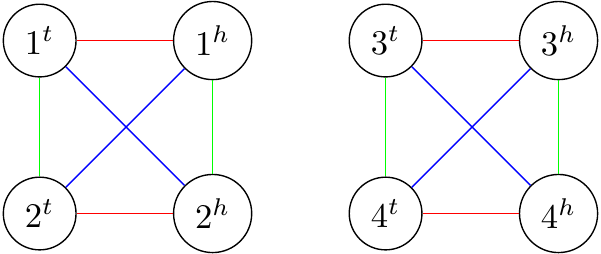}
\caption{Breakpoint graph $\MBG(A,B,C)$ of genomes $A = (1) (2) (3) (4)$ (red edges), $B = (1, 2) (3,4)$ (blue edges), and $C = (1, -2) (3, -4)$ (green edges) with the property that
no DCJ can decrease $\ts(A,B,C)$ by 2.}
\label{fig:non_coll_MGB}
\end{figure*}

In the next section we demonstrate how DCJs on three genomes can affect their median score.

\subsection*{Strong rearrangements and median score}

\emph{Median genome problem} for given genomes $A,B,C$ is to find a genome (which is called \emph{median genome} and may not be unique)
that attains the \emph{median score}~\cite{6}:
$$\ms(A,B,C) = \min_{M} \dcjd{A}{M} + \dcjd{B}{M} + \dcjd{C}{M}.$$
The median problem can be alternatively posed as finding the minimal number (equal $\ms(A,B,C)$) of DCJs required to transform the genomes $A,B,C$ 
into a single (median) genome (or, vice versa, to obtain $A,B,C$ from a single genome). 
In fact, this formulation further generalizes and becomes particularly useful for phylogeny reconstruction of a larger number of genomes~\cite{4}.

From perspective of this formulation, it becomes important to realize what triples of genomes can be obtained from a single genome with strong DCJs.
We start with proving a helpful lemma and bounds on the median score in terms of the triangle score.

\begin{lemma}\label{LemmaMS}
For three genomes on the same set of genes, a DCJ in one of the genomes may change their median score by at most 1.
\end{lemma}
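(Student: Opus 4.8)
The plan is to exploit the fact, stated above, that a single DCJ changes any pairwise DCJ distance by at most $1$, together with the definition of the median score as a minimum over all candidate median genomes $M$. By symmetry among the three genomes, I would assume without loss of generality that the DCJ is applied to $A$, transforming it into a genome $A'$; the goal is then to show $|\ms(A',B,C) - \ms(A,B,C)| \le 1$. The whole argument is a sandwiching estimate obtained by reusing an optimal median of one triple as a feasible median of the other.

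For the upper bound $\ms(A',B,C) \le \ms(A,B,C)+1$, I would take a median genome $M$ realizing $\ms(A,B,C)$ and reuse it as a (not necessarily optimal) candidate median for the triple $(A',B,C)$. Since $A'$ is obtained from $A$ by a single DCJ, we have $\dcjd{A'}{M} \le \dcjd{A}{M}+1$, while the distances $\dcjd{B}{M}$ and $\dcjd{C}{M}$ are unchanged. Summing the three terms and using that $M$ is merely a candidate (hence an upper bound) for the median score of $(A',B,C)$ gives the claim.

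For the matching lower bound $\ms(A',B,C) \ge \ms(A,B,C)-1$, I would argue symmetrically, relying on the reversibility of DCJ operations. I would take a median $M'$ realizing $\ms(A',B,C)$ and reuse it as a candidate median for $(A,B,C)$. Because $A$ is recoverable from $A'$ by one DCJ, namely the inverse operation, we have $\dcjd{A}{M'} \le \dcjd{A'}{M'}+1$, and summing over the three distances yields $\ms(A,B,C) \le \ms(A',B,C)+1$. Combining the two inequalities then proves the lemma.

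There is no substantial obstacle here: the entire argument rests on the one-DCJ-changes-a-distance-by-at-most-one fact and on the substitution of an optimal median across the two median problems. The only point worth stating carefully is the reversibility of DCJ, which is what allows the roles of $A$ and $A'$ to be interchanged so that both directions of the bound follow from the very same elementary estimate applied once in each direction.
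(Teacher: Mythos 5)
Your proposal is correct and follows essentially the same route as the paper: reuse an optimal median of one triple as a feasible candidate for the other, apply the fact that one DCJ changes $\dcjd{A}{M}$ by at most $1$, and obtain the reverse inequality by the symmetric (inverse-DCJ) argument. The paper's ``symmetrically'' is exactly your second half, so there is nothing to add.
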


\begin{proof} Let $A,B,C$ be genomes of the same set of genes. Consider a DCJ in a genome $A$ and denote the resulting genome by $A'$.
Let $M$ be a median genome of the genomes $A,B,C$ so that
$$\ms(A,B,C) = \dcjd{A}{M} + \dcjd{B}{M} + \dcjd{C}{M}.$$
Clearly, $\dcjd{A}{M} \geq \dcjd{A'}{M} - 1$ and thus
\begin{multline*}
\ms(A,B,C) = \dcjd{A}{M} + \dcjd{B}{M} + \dcjd{C}{M} \\
\geq \dcjd{A'}{M} + \dcjd{B}{M} + \dcjd{C}{M} - 1 \geq \ms(A',B,C) - 1,
\end{multline*}
i.e., $\ms(A',B,C) - \ms(A,B,C) \leq 1$.
Symmetrically, we also have $\ms(A,B,C) - \ms(A',B,C) \leq 1$ and thus $|\ms(A,B,C) - \ms(A',B,C)| \leq 1$.
\end{proof}

\begin{theorem}\label{ThBounds}
For genomes $A,B,C$ on the same set of genes, we have
$$\frac{1}{2}\cdot\ts(A,B,C) \leq \ms(A,B,C) \leq \frac{2}{3}\cdot\ts(A,B,C).$$
\end{theorem}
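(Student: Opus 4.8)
The plan is to establish the two bounds by completely elementary arguments, using only that the DCJ distance is a metric (in particular, satisfies the triangle inequality) and that $\ms(A,B,C)$ is a minimum over all candidate median genomes $M$.

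For the lower bound $\tfrac12\ts(A,B,C)\le\ms(A,B,C)$, I would fix any median genome $M$ attaining the minimum, so that $\ms(A,B,C)=\dcjd{A}{M}+\dcjd{B}{M}+\dcjd{C}{M}$. Applying the triangle inequality to each of the three pairs via $M$ gives $\dcjd{A}{B}\le\dcjd{A}{M}+\dcjd{M}{B}$, and similarly $\dcjd{A}{C}\le\dcjd{A}{M}+\dcjd{M}{C}$ and $\dcjd{B}{C}\le\dcjd{B}{M}+\dcjd{M}{C}$. Summing these three inequalities, each of the distances $\dcjd{A}{M}$, $\dcjd{B}{M}$, $\dcjd{C}{M}$ appears exactly twice on the right-hand side, so $\ts(A,B,C)\le 2\,\ms(A,B,C)$, which is the claimed lower bound.

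For the upper bound $\ms(A,B,C)\le\tfrac23\ts(A,B,C)$, I would use the given genomes themselves as candidate medians. Taking $M=A$ yields the total distance $\dcjd{A}{B}+\dcjd{A}{C}$ (since $\dcjd{A}{A}=0$); taking $M=B$ yields $\dcjd{A}{B}+\dcjd{B}{C}$; and taking $M=C$ yields $\dcjd{A}{C}+\dcjd{B}{C}$. Each of these three totals is an upper bound for $\ms(A,B,C)$, since $\ms$ is the minimum over all medians. Their sum equals $2\,\ts(A,B,C)$, because each pairwise distance appears in exactly two of the three totals. By averaging (pigeonhole), the smallest of the three totals is at most $\tfrac23\ts(A,B,C)$, and hence $\ms(A,B,C)$, being no larger than that smallest total, satisfies $\ms(A,B,C)\le\tfrac23\ts(A,B,C)$.

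I do not expect a genuine obstacle here, as both arguments are short; the only point requiring care is that the DCJ distance is indeed a metric so that the triangle inequality is available. This follows from its definition as the minimal number of DCJs transforming one genome into another, which the excerpt already takes for granted. The mild asymmetry worth noting is that the lower bound is attained by reasoning through an optimal median, whereas the upper bound deliberately avoids solving the (NP-hard) median problem and instead exploits only the three trivial medians $A$, $B$, $C$; this is precisely why the constant $\tfrac23$ arises from the $2\,\ts$ total split among three candidates.
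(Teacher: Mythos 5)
Your proof is correct and follows essentially the same route as the paper: the upper bound is obtained identically (the three candidate medians $A$, $B$, $C$ give three inequalities summing to $2\cdot\ts(A,B,C)\geq 3\cdot\ms(A,B,C)$), and your lower bound via summing the three triangle inequalities through an optimal median $M$ is equivalent to the paper's accounting that each of the $\ms(A,B,C)$ DCJs transforming $A,B,C$ into $M$ decreases the triangle score by at most $2$. The only cosmetic difference is that you invoke the metric triangle inequality for $\dcjd{\cdot}{\cdot}$ where the paper counts DCJs, but these rest on the same elementary fact.
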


\begin{proof} 
Consider a transformation of each of the genomes $A,B,C$ into a median genome with DCJs.
The total number of DCJs in this transformation is $\ms(A,B,C)$. Since each DCJ decreases $\ts(A,B,C)$ by at most 2, we have 
$\ts(A,B,C) \leq 2\cdot \ms(A,B,C)$, implying that $\nicefrac{1}{2}\cdot\ts(A,B,C) \leq \ms(A,B,C)$.

On the other hand, the number of DCJs in any transformation of the genomes $A,B,C$ into the genome $A$ is at least $\ms(A,B,C)$, implying that
$\dcjd{B}{A} + \dcjd{C}{A} \geq \ms(A,B,C)$. Similarly, we have $\dcjd{A}{B} + \dcjd{C}{B} \geq \ms(A,B,C)$ and $\dcjd{A}{C} + \dcjd{B}{C} \geq \ms(A,B,C)$. 
Summing up these three inequalities, we get $2\cdot \ts(A,B,C) \geq 3\cdot \ms(A,B,C)$ and thus $\ms(A,B,C)\leq \nicefrac{2}{3}\cdot\ts(A,B,C)$.
\end{proof}

We remark that the lower bound and a slightly better upper bound 
$\ms(A,B,C)\leq \min\{ \dcjd{A}{B} + \dcjd{A}{C}, \dcjd{A}{B} + \dcjd{B}{C}, \dcjd{A}{C} + \dcjd{B}{C}\}$ was also used in~\cite{9}.

The following theorem classifies all triples of genomes for which the median score coincides with its lower bound and links them with the genomes
constructed in Theorem~\ref{ThPDachieve}.

\begin{theorem}\label{ThClassify}
For genomes $A,B,C$ on the same set of genes, $\ms(A,B,C) = \nicefrac{1}{2}\cdot\ts(A,B,C)$ 
if and only if $A,B,C$ can be obtained from a single genome with strong DCJs.
\end{theorem}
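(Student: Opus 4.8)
The plan is to prove the two implications separately, in both cases leveraging the lower bound $\frac{1}{2}\ts(A,B,C)\le\ms(A,B,C)$ from Theorem~\ref{ThBounds} together with the fact that a single DCJ changes the triangle score by at most $2$, strong DCJs being precisely those that realize the increase $+2$.

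For the backward implication I would argue as follows. Suppose $A,B,C$ arise from a single genome $M$ by a sequence of $k$ strong DCJs. Each strong DCJ raises the triangle score by exactly $2$, and the initial triple $(M,M,M)$ has triangle score $0$, so $\ts(A,B,C)=2k$. Reversing the same $k$ DCJs transforms $A$, $B$, $C$ back into the common genome $M$, whence $\dcjd{A}{M}+\dcjd{B}{M}+\dcjd{C}{M}\le k$ and therefore $\ms(A,B,C)\le k$. The lower bound of Theorem~\ref{ThBounds} gives $\ms(A,B,C)\ge\frac{1}{2}\ts(A,B,C)=k$, so both inequalities collapse to $\ms(A,B,C)=k=\frac{1}{2}\ts(A,B,C)$.

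For the forward implication I would set $m=\ms(A,B,C)$, so that the hypothesis reads $\ts(A,B,C)=2m$. Fixing a median genome $M$, I would take shortest transformations of $M$ into each of $A$, $B$, $C$ and concatenate them, in any interleaved order, into a single sequence of $m=\dcjd{A}{M}+\dcjd{B}{M}+\dcjd{C}{M}$ DCJs carrying $(M,M,M)$ to $(A,B,C)$. The crux is then a tight counting argument: along this sequence the triangle score climbs from $0$ to $2m$ in exactly $m$ steps, and since no DCJ can raise it by more than $2$, every step must raise it by exactly $2$. Hence each DCJ is strong, exhibiting $A,B,C$ as obtained from the single genome $M$ by strong DCJs.

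The only points that will require care are minor bookkeeping in the forward direction and the identification of a $+2$ step as strong. For the latter, a DCJ acts on just one genome, say $A$, leaving $\dcjd{B}{C}$ fixed and changing each of $\dcjd{A}{B}$ and $\dcjd{A}{C}$ by at most $1$, so an increase of $2$ forces both to grow by $1$ -- exactly the defining behaviour of a strong DCJ. I expect the main (and modest) obstacle to be verifying that the interleaved concatenation is a genuine step-by-step sequence of single DCJs on a common triple of genomes, so that the per-step bound $|\Delta\ts|\le 2$ applies at each intermediate triple and the counting argument goes through.
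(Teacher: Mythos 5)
Your proposal is correct, and the forward implication is essentially the paper's own argument: fix a median genome $M$, view $A,B,C$ as obtained from $M$ by $\ms(A,B,C)=\tfrac{1}{2}\ts(A,B,C)$ DCJs, and observe that since the triangle score climbs from $0$ to $\ts(A,B,C)$ with each step contributing at most $2$, every step must contribute exactly $2$, i.e.\ be strong. (The interleaving concern you raise is harmless: at every intermediate moment one has a well-defined triple of genomes, and the per-step bound $|\Delta\ts|\leq 2$ applies to any DCJ on any triple.) Your backward implication takes a mildly different route from the paper's: the paper invokes Lemma~\ref{LemmaMS} to argue that a strong DCJ raises the lower bound $\tfrac{1}{2}\ts$ by $1$ while raising $\ms$ by at most $1$, so the nonnegative difference $\ms-\tfrac{1}{2}\ts$ never increases and stays at $0$; you instead note directly that $k$ strong DCJs force $\ts(A,B,C)=2k$ while reversing them witnesses $\ms(A,B,C)\leq k$, and the lower bound of Theorem~\ref{ThBounds} closes the gap. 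Your version bypasses Lemma~\ref{LemmaMS} entirely and is slightly more self-contained; the paper's version has the small advantage of isolating the reusable observation that a strong DCJ cannot increase the gap between the median score and its lower bound, which the authors echo in their later discussion of whether strong DCJs must increase the median score. Both arguments are sound.
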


\begin{proof} Suppose that $\ms(A,B,C) = \nicefrac{1}{2}\cdot\ts(A,B,C)$. 
Let $M$ be a median genome of the genomes $A,B,C$. Then $A,B,C$ can be obtained from $M$ with $\ms(A,B,C) = \nicefrac{1}{2}\cdot\ts(A,B,C)$ DCJs. 
This transformation increases the triangle score from $\ts(M,M,M)=0$ to $\ts(A,B,C)$.
Since each of $\nicefrac{1}{2}\cdot\ts(A,B,C)$ DCJs can increase the triangle score by at most 2, they all must be strong.

Vice versa, suppose that $A,B,C$ are obtained from a single genome with strong DCJs.
Lemma~\ref{LemmaMS} implies that a strong DCJ does not increase the difference between the median score and its lower bound.
Since the transformation starts with the median score equal its lower bound (i.e., their difference is 0), 
they further remain equal along the whole transformation, resulting in $\ms(A,B,C) = \nicefrac{1}{2}\cdot\ts(A,B,C)$.
\end{proof}

It remains unclear how tight is the upper bound given in Theorem~\ref{ThBounds}, 
while a better upper bound may improve performance of algorithms for computing median score based on the adequate subgraph decomposition~\cite{8,9}.
Below we prove however that the upper bound cannot be equal to the lower bound plus a constant.

\begin{theorem}\label{ThLBplusC}
The difference $\ms(A,B,C) - \nicefrac{1}{2}\cdot\ts(A,B,C)$ of the median score and its lower bound is not bounded from above by a constant.
\end{theorem}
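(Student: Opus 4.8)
The plan is to exhibit an explicit family of genome triples whose median score exceeds the lower bound $\nicefrac12\cdot\ts$ by an amount growing linearly in the number of genes. The natural building block is the configuration of Figure~\ref{fig:MBG_example}, namely $A=(1)(2)$, $B=(1,2)$, $C=(1,-2)$, for which all three pairwise distances equal $1$, so $\ts=3$ and $\nicefrac12\cdot\ts=\nicefrac32$. Since the median score is an integer and $\ms\ge\nicefrac12\cdot\ts$ by Theorem~\ref{ThBounds}, already here $\ms\ge 2$; this gap of $\nicefrac12$ is the seed to amplify. I would take $k$ disjoint copies of this gadget on a gene set of size $2k$ (the case $k=2$ being precisely Figure~\ref{fig:non_coll_MGB}), obtaining genomes $A_k,B_k,C_k$ whose breakpoint graph $\MBG(A_k,B_k,C_k)$ is a disjoint union of $k$ copies of the four-vertex component of Figure~\ref{fig:MBG_example}, in each of which the three genomes realize all three perfect matchings on those four vertices.

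First I would record the easy quantities. Because the breakpoint graph splits into $k$ vertex-disjoint components and DCJ distance is additive over components, each pairwise distance equals $k$, so $\ts(A_k,B_k,C_k)=3k$ and the lower bound is $\nicefrac32\cdot k$. The whole theorem then reduces to a single lower bound on the median score, $\ms(A_k,B_k,C_k)\ge 2k$: combined with $\nicefrac12\cdot\ts=\nicefrac32\cdot k$ this forces the difference to be at least $\nicefrac{k}{2}$, which is unbounded. Note that for the conclusion I only need a lower bound on $\ms$, so I can ignore the matching upper bound entirely.

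The crux is this lower bound. Writing $\dcjd{X}{M}=n-c(X,M)$ with $n=2k$, we have $\ms(A_k,B_k,C_k)=6k-\max_M\bigl(c(A_k,M)+c(B_k,M)+c(C_k,M)\bigr)$, so it suffices to show that no perfect matching $M$ attains $c(A_k,M)+c(B_k,M)+c(C_k,M)>4k$. Inside a single four-vertex component the three genomes exhaust all perfect matchings, so an $M$ confined there agrees with exactly one and attains $2+1+1=4$; hence a component-respecting $M$ gives exactly $4k$, matching the target. The work is to rule out that an $M$ with edges crossing between components does better. I would do this by an uncrossing argument: since every $A_k$-, $B_k$-, and $C_k$-edge stays inside one component, any alternating cycle passing between two components must traverse at least two cross-component $M$-edges, so such crossings can only fuse cycles that would otherwise be separate; making this precise shows that for every $M$ there is a component-respecting matching $M'$ with $c(A_k,M')+c(B_k,M')+c(C_k,M')\ge c(A_k,M)+c(B_k,M)+c(C_k,M)$. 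Establishing this monotonicity — equivalently, additivity of the median score over the connected components of $\MBG(A_k,B_k,C_k)$, as is implicit in the adequate-subgraph decomposition of~\cite{8,9} — is the main obstacle; everything else is bookkeeping.

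Finally I would assemble the pieces: $\ms(A_k,B_k,C_k)-\nicefrac12\cdot\ts(A_k,B_k,C_k)\ge 2k-\nicefrac32\cdot k=\nicefrac{k}{2}$, which tends to infinity with $k$, so no constant can bound the difference from above. I would remark in passing that Theorem~\ref{ThClassify} already predicts a strictly positive gap for each $k$, since these triples are not obtainable from a single genome by strong DCJs; the new content here is the quantitative, unbounded growth of that gap.
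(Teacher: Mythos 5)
Your construction is the paper's own: the paper builds $\MBG(A_n,B_n,C_n)$ from $n$ copies of the genomes of Fig.~\ref{fig:non_coll_MGB}, which is exactly $2n$ copies of the four-vertex gadget of Fig.~\ref{fig:MBG_example}, and its arithmetic ($\ts=3k$, $\ms=2k$, gap $\nicefrac{k}{2}$ on $2k$ genes) agrees with yours. The reduction of the theorem to the single inequality $\ms(A_k,B_k,C_k)\geq 2k$, i.e.\ to showing no perfect matching $M$ achieves $c(A_k,M)+c(B_k,M)+c(C_k,M)>4k$, is also the right move, and your computation that a component-respecting $M$ attains exactly $4k$ is correct.

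The gap is precisely the step you flag as ``the main obstacle,'' and it is not mere bookkeeping. The uncrossing argument is routine for a \emph{single} color: given one fixed matching $P$ supported inside the components, you can always rewire the cross-component edges of $M$ so that $c(P,\cdot)$ does not decrease. But here you need one matching $M'$ that simultaneously does not decrease the cycle count against $A_k$, $B_k$, and $C_k$, and an uncrossing that is good for one color can be bad for another; resolving this three-way conflict is exactly the content of the adequate-subgraph machinery, not something that falls out of ``each crossing cycle uses at least two cross edges.'' (Indeed, if component-wise decomposition of the median were elementary for arbitrary connected components, much of~\cite{8,9} would be unnecessary.) The clean repair — and what the paper actually does — is to observe that each four-vertex gadget is a \emph{strongly adequate} subgraph (on $m=2$ genes the best interior matching yields $4>\tfrac{3m}{2}=3$ cycles) and then invoke the theorem of~\cite{8,9} that no median-genome edge connects two strongly adequate subgraphs; additivity of $\ms$ over the gadgets, hence $\ms(A_k,B_k,C_k)=2k$, follows immediately. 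You cite this result only parenthetically as something your uncrossing would be ``equivalent'' to; you should either make it the argument outright or supply the genuinely nontrivial simultaneous-uncrossing proof you currently only gesture at.
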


\begin{proof} To prove the theorem, for every $n=1,2,\dots$, we will construct three genomes $A_n, B_n, C_n$ on the same $4n$ genes for which 
$\ms(A_n,B_n,C_n) - \nicefrac{1}{2}\cdot\ts(A_n,B_n,C_n) = n$.

We start with genomes $A_1 = (1)\ (2)\ (3)\ (4)$, $B_1 = (1, 2)\ (3, 4)$, and $C_1 = (1, -2)\ (3, -4)$.
The breakpoint graph $\MBG(A_1,B_1,C_1)$ consists of two strongly adequate subgraphs (Fig.~\ref{fig:non_coll_MGB}).
We have $\ms(A_1,B_1,C_1) = 4$ and $\ts(A_1,B_1,C_1)=6$, resulting in $\ms(A_1,B_1,C_1) - \nicefrac{1}{2}\cdot\ts(A_1,B_1,C_1) = 1$.

To construct $\MBG(A_n,B_n,C_n)$ we take $n$ copies of $\MBG(A_1,B_1,C_1)$ and relabel their vertices appropriately. In particular, for $n=2$ we get genomes
$A_2 = (1)\ (2)\ (3)\ (4)\ (5)\ (6)\ (7)\ (8)$, $B_2 = (1, 2)\ (3, 4)\ (5, 6)\ (7, 8)$, and $C_2 = (1, -2)\ (3, -4)\ (5, -6)\ (7, -8)$.
Since edges of a median genome do not connect strongly adequate subgraphs of the breakpoint graph~\cite{8,9},
every copy of $\MBG(A_1,B_1,C_1)$ in $\MBG(A_n,B_n,C_n)$ contributes $4$ to the median score.
It is also clear that every copy of $\MBG(A_1,B_1,C_1)$ contributes $6$ to the triangle score, implying that
$\ms(A_n,B_n,C_n) - \nicefrac{1}{2}\cdot\ts(A_n,B_n,C_n) = 4n - 3n = n$.
\end{proof}

We conclude our analysis with the last but not the least observation about the lower bound $\nicefrac{1}{2}\cdot\ts(A,B,C)\leq\ms(A,B,C)$.
According to Lemma~\ref{LemmaMS}, a DCJ in one of the genomes $A,B,C$ can either increase/decrease the right hand side of this inequality (i.e., the median score)
by 1, or keep it intact.
For a strong DCJ (moving one genome farther away from each of the other two genomes), the left hand side of the inequality is increased by 1.
From this perspective, it is very natural to expect that a strong DCJ should also increase the median score (e.g., it was so in the proof of Theorem~\ref{ThClassify}).
Surprisingly, this intuition fails: Figure~\ref{fig:Example} gives a counterexample of a breakpoint graph of three genomes 
with a strong DCJ that does not increase the median score.

\begin{figure*}[!t]
\centering 
\includegraphics{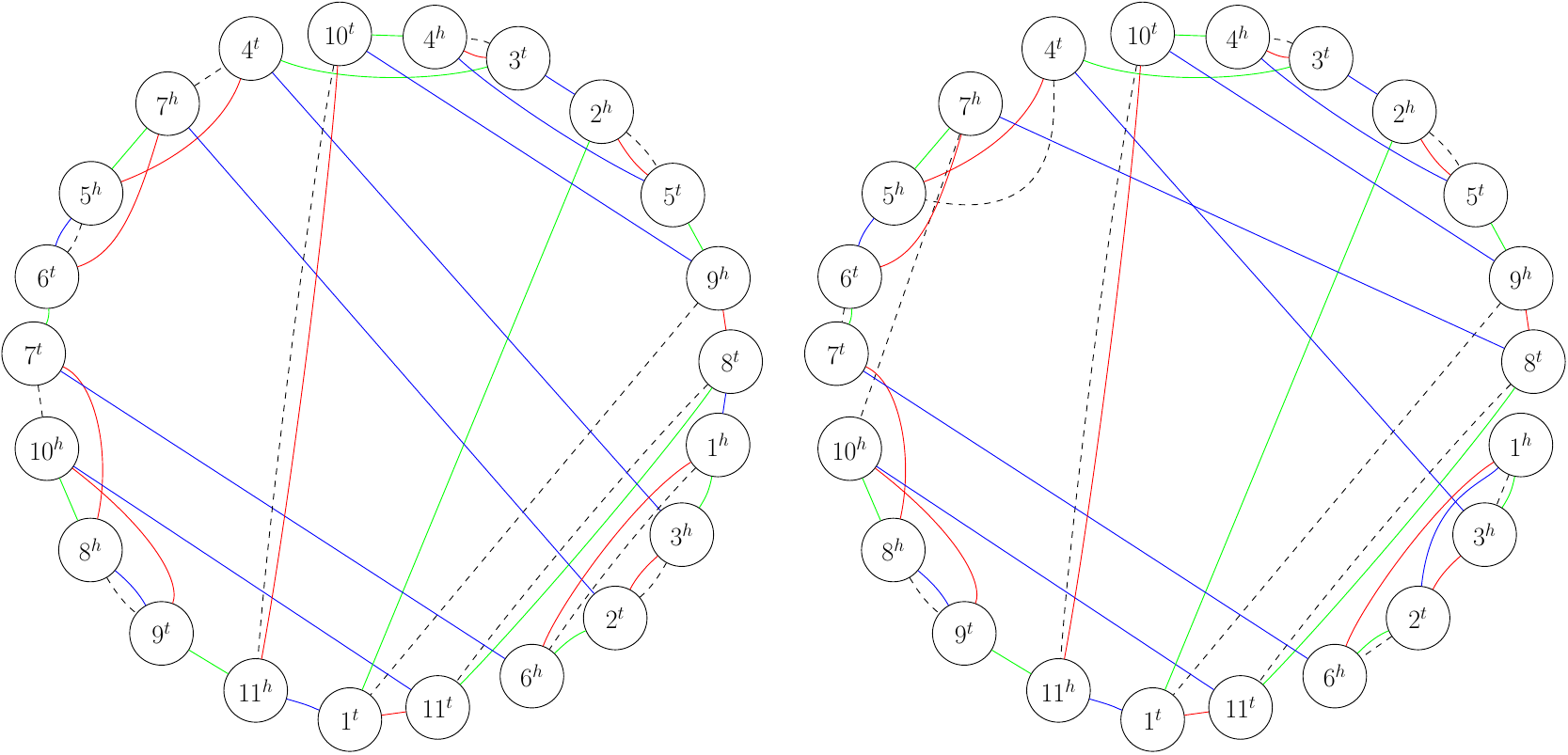}
\caption{\textbf{Left panel:} Breakpoint graph of genomes $A = (1, -6, -7, -8, -9, -10, -11) (2, 5, 4, 3)$ (red edges),
$B = (1, 8, 9, 10, 11) (2, 3, 4, 5, 6, 7)$ (blue edges), $C = (1, -3, 4, 10, -8, 11, 9, 5, -7, 6, 2)$ (green edges), and their median genome 
$M = (1, -6, -5, -2, -3, -4, -7, -10, -11, 8, 9)$ (dashed edges) with $\ts(A,B,C) = 24$ and $\ms(A,B,C)=15$.
The pairwise DCJ distances are $\dcjd{A}{B} = \dcjd{A}{C} = \dcjd{C}{B} = 8$, $\dcjd{A}{M} = 3$, $\dcjd{B}{M} = 5$, and $\dcjd{C}{M} = 7$.\\
\textbf{Right panel:} Breakpoint graph of the same genomes $A$ (red edges), $C$ (green edges), and genome $B' = (1, 2, 3, 4, 5, 6, 7, 8, 9, 10, 11)$ (blue edges) 
obtained from $B$ by a single fusion. The genomes $A$, $B'$, $C$ have a different median genome $M' = (1, -3, -4, -5, -2, -6, 7, -10, -11, 8, 9)$ (dashed edges) 
with the same median score $\ms(A,B',C)=15$ and larger triangle score $\ts(A,B',C) = 26$. The pairwise DCJ distances are 
$\dcjd{A}{B'} = \dcjd{C}{B'} = 9$, $\dcjd{A}{C} = 8$, $\dcjd{A}{M'} = 4$, $\dcjd{B'}{M'} = 6$, and $\dcjd{C}{M'} = 5$. \\
The median genomes $M$ and $M'$ were computed with GASTS~\cite{Xu2011}.
}
\label{fig:Example}
\end{figure*}

\section*{Results and discussion}

We studied two measures of evolutionary remoteness of three genomes $A,B,C$: the triangle score $\ts(A,B,C)$ 
(equal the sum of the pairwise rearrangement distances) 
and the median score $\ms(A,B,C)$ (equal the minimum total rearrangement distance from a single genome).
While computing $\ts(A,B,C)$ takes linear time and computing $\ms(A,B,C)$ is NP-hard, they are connected by the inequality
$\nicefrac{1}{2}\cdot\ts(A,B,C) \leq \ms(A,B,C) \leq \nicefrac{2}{3}\cdot\ts(A,B,C)$ (Theorem~\ref{ThBounds}) giving 
the lower and upper bounds for the median score in terms of the triangle score.

In view of the median genome problem as finding a transformation of the given genomes into a single genome 
(or a reverse transformation of a single genome into the given genomes)
with the smallest number of genome rearrangements, it is important to understand how rearrangements can change the median score and its bounds. 
When $A,B,C$ equal the same genome $M$, the median score trivially coincides with its lower and upper bounds as $\ts(M,M,M) = \ms(M,M,M) = 0$.
Since each rearrangement may change the triangle score by at most $2$ and the median score by at most $1$ (Lemma~\ref{LemmaMS}),
we are particularly interested in strong rearrangements which increase the triangle score by $2$ (and thus increase the lower bound by $1$).

We showed that the median score attains its lower bound (i.e., $\ms(A,B,C) = \nicefrac{1}{2}\cdot\ts(A,B,C)$) exactly on the triples of genomes 
that can be obtained from a single genome with strong rearrangements (Theorem~\ref{ThClassify}). 
We proved that strong rearrangements are common enough to exist for any triple of genomes 
as soon as at least two of their pairwise distances are smaller than the maximum (Theorem~\ref{strong_DCJ}) 
and to produce a triple of genomes with the prescribed pairwise distances (Theorem~\ref{ThPDachieve}). From this perspective, it comes as a total surprise
that strong rearrangements are not ``powerful'' enough to always increase the median score as illustrated by the counterexample in Fig.~\ref{fig:Example}.
This counterexample implies that the median score and the triangle score are not as well-correlated as one's intuition may suggest.

It remains unclear how tight is the upper bound for the median score, while a better upper bound may improve performance of existing algorithms for computing the median score.
Nonetheless, we made an initial step in this direction by proving that there is no upper bound equal the lower bound plus a constant (Theorem~\ref{ThLBplusC}).

\vspace{-.27cm}

\section*{Competing interests}
\ifthenelse{\boolean{publ}}{\small}{}
The authors declare that they have no competing interests.

\vspace{-.27cm}

\section*{Acknowledgements}
\ifthenelse{\boolean{publ}}{\small}{}
This work was partially supported by the Government of the Russian Federation (grant 11.G34.31.0018).

{\ifthenelse{\boolean{publ}}{\footnotesize}{\small}
 \bibliographystyle{bmc_article}  
  \bibliography{article} }     


\begin{thebibliography}{10}
\providecommand{\url}[1]{[#1]}
\providecommand{\urlprefix}{}

\bibitem{10}
Yancopoulos S, Attie O, Friedberg R: \textbf{Efficient sorting of genomic
  permutations by translocation, inversion and block interchange}.
  \emph{Bioinformatics} 2005, \textbf{21}(16):3340--3346.

\bibitem{11}
Bergeron A, Mixtacki J, Stoye J: \textbf{{A Unifying View of Genome
  Rearrangements}}. \emph{{Lecture Notes in Computer Science}} 2006,
  \textbf{4175}:163--173.

\bibitem{2}
Alekseyev MA, Pevzner PA: \textbf{{Multi-Break Rearrangements and Chromosomal
  Evolution}}. \emph{Theoretical Computer Science} 2008,
  \textbf{395}(2-3):193--202.

\bibitem{6}
Tannier E, Zheng C, Sankoff D: \textbf{Multichromosomal median and halving
  problems under different genomic distances}. \emph{BMC Bioinformatics} 2009,
  \textbf{10}:120.

\bibitem{5}
Caprara A: \textbf{{The Reversal Median Problem}}. \emph{INFORMS Journal on
  Computing} 2003, \textbf{15}:93--113.

\bibitem{8}
Xu AW, Sankoff D: \textbf{{Decompositions of Multiple Breakpoint Graphs and
  Rapid Exact Solutions to the Median Problem}}. \emph{Lecture Notes in
  Computer Science} 2008, \textbf{5251}:25--37.

\bibitem{9}
Xu AW: \textbf{{A Fast and Exact Algorithm for the Median of three Problem: a
  Graph Decomposition Approach}}. \emph{{Journal of Computational Biology}}
  2009, \textbf{16}:1369--1381.

\bibitem{7}
Lenne R, Solnon C, Tannier E, Birattari M: \textbf{Effective Stochastic Local
  Search Algorithms for the Genomic Median Problem}. In \emph{Doctoral
  Symposium on Engineering: Stochastic Local Search Algorithms (SLS-DS). IRIDIA
  Technical Report Series} 2007:1--5.

\bibitem{12}
Rajan V, Xu A, Lin Y, Swenson K, Moret B: \textbf{Heuristics for the inversion
  median problem}. \emph{BMC Bioinformatics} 2010, \textbf{11}(Suppl 1):S30.

\bibitem{13}
Goeffon A, Nikolski M, Sherman DJ: \textbf{An efficient probabilistic
  population-based descent for the median genome problem}. In
  \emph{{Proceedings of the 10th annual conference on Genetic and evolutionary
  computation (GECCO)}} 2008:315--322.

\bibitem{1}
Alekseyev MA, Pevzner PA: \textbf{{Are There Rearrangement Hotspots in the
  Human Genome?}} \emph{PLoS Computational Biology} 2007, \textbf{3}(11):e209.

\bibitem{4}
Alekseyev MA, Pevzner PA: \textbf{{Breakpoint Graphs and Ancestral Genome
  Reconstructions}}. \emph{Genome Research} 2009, \textbf{19}(5):943--957.

\bibitem{3}
Alekseyev MA: \textbf{{Multi-Break Rearrangements and Breakpoint Re-uses: from
  Circular to Linear Genomes}}. \emph{Journal of Computational Biology} 2008,
  \textbf{15}(8):1117--1131.

\bibitem{Xu2011}
Xu AW, Moret BME: \textbf{{GASTS: Parsimony Scoring under Rearrangements}}.
  \emph{{Lecture Notes in Computer Science}} 2011, \textbf{6833}:351--363.

\end{thebibliography}

\newcommand{\BMCxmlcomment}[1]{}

\BMCxmlcomment{

<refgrp>

<bibl id="B1">
  <title><p>Efficient sorting of genomic permutations by translocation,
  inversion and block interchange</p></title>
  <aug>
    <au><snm>Yancopoulos</snm><fnm>S</fnm></au>
    <au><snm>Attie</snm><fnm>O</fnm></au>
    <au><snm>Friedberg</snm><fnm>R</fnm></au>
  </aug>
  <source>Bioinformatics</source>
  <pubdate>2005</pubdate>
  <volume>21</volume>
  <issue>16</issue>
  <fpage>3340</fpage>
  <lpage>3346</lpage>
</bibl>

<bibl id="B2">
  <title><p>{A Unifying View of Genome Rearrangements}</p></title>
  <aug>
    <au><snm>Bergeron</snm><fnm>A</fnm></au>
    <au><snm>Mixtacki</snm><fnm>J</fnm></au>
    <au><snm>Stoye</snm><fnm>J</fnm></au>
  </aug>
  <source>{Lecture Notes in Computer Science}</source>
  <pubdate>2006</pubdate>
  <volume>4175</volume>
  <fpage>163</fpage>
  <lpage>-173</lpage>
</bibl>

<bibl id="B3">
  <title><p>{Multi-Break Rearrangements and Chromosomal Evolution}</p></title>
  <aug>
    <au><snm>Alekseyev</snm><fnm>MA</fnm></au>
    <au><snm>Pevzner</snm><fnm>PA</fnm></au>
  </aug>
  <source>Theoretical Computer Science</source>
  <pubdate>2008</pubdate>
  <volume>395</volume>
  <issue>2-3</issue>
  <fpage>193</fpage>
  <lpage>202</lpage>
</bibl>

<bibl id="B4">
  <title><p>Multichromosomal median and halving problems under different
  genomic distances</p></title>
  <aug>
    <au><snm>Tannier</snm><fnm>E</fnm></au>
    <au><snm>Zheng</snm><fnm>C</fnm></au>
    <au><snm>Sankoff</snm><fnm>D</fnm></au>
  </aug>
  <source>BMC Bioinformatics</source>
  <pubdate>2009</pubdate>
  <volume>10</volume>
  <issue>1</issue>
  <fpage>120</fpage>
</bibl>

<bibl id="B5">
  <title><p>{The Reversal Median Problem}</p></title>
  <aug>
    <au><snm>Caprara</snm><fnm>A</fnm></au>
  </aug>
  <source>INFORMS Journal on Computing</source>
  <pubdate>2003</pubdate>
  <volume>15</volume>
  <issue>1</issue>
  <fpage>93</fpage>
  <lpage>-113</lpage>
</bibl>

<bibl id="B6">
  <title><p>{Decompositions of Multiple Breakpoint Graphs and Rapid Exact
  Solutions to the Median Problem}</p></title>
  <aug>
    <au><snm>Xu</snm><fnm>AW</fnm></au>
    <au><snm>Sankoff</snm><fnm>D</fnm></au>
  </aug>
  <source>Lecture Notes in Computer Science</source>
  <pubdate>2008</pubdate>
  <volume>5251</volume>
  <fpage>25</fpage>
  <lpage>-37</lpage>
</bibl>

<bibl id="B7">
  <title><p>{A Fast and Exact Algorithm for the Median of three Problem: a
  Graph Decomposition Approach}</p></title>
  <aug>
    <au><snm>Xu</snm><fnm>AW</fnm></au>
  </aug>
  <source>{Journal of Computational Biology}</source>
  <pubdate>2009</pubdate>
  <volume>16</volume>
  <fpage>1369</fpage>
  <lpage>-1381</lpage>
</bibl>

<bibl id="B8">
  <title><p>Effective Stochastic Local Search Algorithms for the Genomic Median
  Problem</p></title>
  <aug>
    <au><snm>Lenne</snm><fnm>R</fnm></au>
    <au><snm>Solnon</snm><fnm>C</fnm></au>
    <au><snm>Tannier</snm><fnm>E</fnm></au>
    <au><snm>Birattari</snm><fnm>M</fnm></au>
  </aug>
  <source>Doctoral Symposium on Engineering: Stochastic Local Search Algorithms
  (SLS-DS). IRIDIA Technical Report Series</source>
  <pubdate>2007</pubdate>
  <fpage>1</fpage>
  <lpage>-5</lpage>
</bibl>

<bibl id="B9">
  <title><p>Heuristics for the inversion median problem</p></title>
  <aug>
    <au><snm>Rajan</snm><fnm>V</fnm></au>
    <au><snm>Xu</snm><fnm>A</fnm></au>
    <au><snm>Lin</snm><fnm>Y</fnm></au>
    <au><snm>Swenson</snm><fnm>K</fnm></au>
    <au><snm>Moret</snm><fnm>B</fnm></au>
  </aug>
  <source>BMC Bioinformatics</source>
  <pubdate>2010</pubdate>
  <volume>11</volume>
  <issue>Suppl 1</issue>
  <fpage>S30</fpage>
</bibl>

<bibl id="B10">
  <title><p>An efficient probabilistic population-based descent for the median
  genome problem</p></title>
  <aug>
    <au><snm>Goeffon</snm><fnm>A</fnm></au>
    <au><snm>Nikolski</snm><fnm>M</fnm></au>
    <au><snm>Sherman</snm><fnm>DJ</fnm></au>
  </aug>
  <source>{Proceedings of the 10th annual conference on Genetic and
  evolutionary computation (GECCO)}</source>
  <pubdate>2008</pubdate>
  <fpage>315</fpage>
  <lpage>-322</lpage>
</bibl>

<bibl id="B11">
  <title><p>{Are There Rearrangement Hotspots in the Human Genome?}</p></title>
  <aug>
    <au><snm>Alekseyev</snm><fnm>MA</fnm></au>
    <au><snm>Pevzner</snm><fnm>PA</fnm></au>
  </aug>
  <source>PLoS Computational Biology</source>
  <pubdate>2007</pubdate>
  <volume>3</volume>
  <issue>11</issue>
  <fpage>e209</fpage>
</bibl>

<bibl id="B12">
  <title><p>{Breakpoint Graphs and Ancestral Genome
  Reconstructions}</p></title>
  <aug>
    <au><snm>Alekseyev</snm><fnm>MA</fnm></au>
    <au><snm>Pevzner</snm><fnm>PA</fnm></au>
  </aug>
  <source>Genome Research</source>
  <pubdate>2009</pubdate>
  <volume>19</volume>
  <issue>5</issue>
  <fpage>943</fpage>
  <lpage>957</lpage>
</bibl>

<bibl id="B13">
  <title><p>{Multi-Break Rearrangements and Breakpoint Re-uses: from Circular
  to Linear Genomes}</p></title>
  <aug>
    <au><snm>Alekseyev</snm><fnm>MA</fnm></au>
  </aug>
  <source>Journal of Computational Biology</source>
  <pubdate>2008</pubdate>
  <volume>15</volume>
  <issue>8</issue>
  <fpage>1117</fpage>
  <lpage>1131</lpage>
</bibl>

<bibl id="B14">
  <title><p>{GASTS: Parsimony Scoring under Rearrangements}</p></title>
  <aug>
    <au><snm>Xu</snm><fnm>AW</fnm></au>
    <au><snm>Moret</snm><fnm>BME</fnm></au>
  </aug>
  <source>{Lecture Notes in Computer Science}</source>
  <pubdate>2011</pubdate>
  <volume>6833</volume>
  <fpage>351</fpage>
  <lpage>-363</lpage>
</bibl>

</refgrp>
} 

\ifthenelse{\boolean{publ}}{\end{multicols}}{}

\end{bmcformat}
\end{document}